\DeclareMathAlphabet{\mathpzc}{OT1}{pzc}{m}{it}
\newif\ifcomments
\let\newComments\newKibitzer
\newComments\GM{GM}{red}
\newComments\SB{SB}{blue}
\newtheorem{Thm}{Theorem}
\newtheorem{Lem}{Lemma}
\newtheorem{corollary}{Corollary}
\newtheorem{rmk}{Remark}[section]
\numberwithin{equation}{section}
\newcommand{\mr}{\mathrm}
\newcommand{\tr}{\mr{tr}\,}
\begin{document}

\title{The classical dynamic symmetry for the $\mathrm{Sp}(1)$-Kepler problems}
\author{Sofiane Bouarroudj}

\address{Division of Science and Mathematics, New York University Abu Dhabi, Po Box 129188, Abu Dhabi, United Arab Emirates.}
\email{sofiane.bouarroudj@nyu.edu}

\author{Guowu Meng}
\address{Department of Mathematics, Hong Kong Univ. of Sci. and
Tech., Clear Water Bay, Kowloon, Hong Kong.}

\email{mameng@ust.hk}
\thanks{The authors were supported by the Hong Hong Research Grants Council under RGC Project No. 16304014; SB was also supported by the grant NYUAD-065.}


\date{\today}



\begin{abstract}
A Poisson realization of the simple real Lie algebra $\mathfrak {so}^*(4n)$ on the phase space of each $\mathrm {Sp}(1)$-Kepler problem is exhibited. As a consequence one obtains the Laplace-Runge-Lenz vector for each classical $\mathrm{Sp}(1)$-Kepler problem. The verification of these Poisson realizations is greatly simplified via an idea due to A. Weinstein.  The totality of these Poisson realizations is shown to be equivalent to the canonical Poisson realization of $\mathfrak {so}^*(4n)$ on the Poisson manifold $T^*\mathbb H_*^n/\mathrm{Sp}(1)$. (Here $\mathbb H_*^n:=\mathbb H^n\backslash \{0\}$ and the Hamiltonian action of $\mathrm{Sp}(1)$ on   $T^*\mathbb H_*^n$ is induced from the natural right action of $\mathrm{Sp}(1)$ on   $\mathbb H_*^n$. )

\smallskip
\noindent \textbf{Keywords.} Kepler problem, Jordan algebra, dynamic symmetry, Laplace-Runge-Lenz vector, Weinstein's universal phase space. 
\end{abstract}

\maketitle
 
\section {Introduction}
The Kepler problem is a textbook example of super integrable models. Its hamiltonian is invariant under the Lie group $\mathrm {SO}(4)$, larger than the manifest symmetry group $\mathrm{SO}(3)$. A remarkable fact about the Kepler problem is that  
the real non-compact Lie algebra $\frak{so}(4,2)$ has a nontrivial Poisson realization on its phase space. This Poisson realization, more precisely its quantized form, was initially discovered by I. A. Malkin and V. I. Man’ko \cite{Malkin66} in 1966.  (For the prehistory of this important discovery about the Kepler problem, one may consult Footnote 2 in Ref. \cite{Barut67}.) In the literature, the real non-compact Lie algebra $\frak{so}(4,2)$ is referred to as the \emph{dynamical symmetry algebra} for the Kepler problem and its afore-mentioned Poisson realization is referred to as the \emph{classical dynamical symmetry} for the Kepler problem. 

The Kepler problem has magnetized versions, under the name of MICZ-Kepler problems. The work of A. Barut and G. Bornzin \cite{Barut71}, extends the study of dynamical symmetry to
these magnetized Kepler problems at the quantum level.  Later, the higher dimensional analogue of MICZ-Kepler problems, under the name of generalized MICZ-Kepler problems, were found\cite{meng2007} and the dynamical symmetry for these models was studied as well, at both the quantum level \cite{mengzhang} and the classical level \cite{meng2013b}.

About six years ago the second author \cite{meng2011,meng2013} discovered that the Kepler problem has a vast generalization based on simple euclidean Jordan algebra, for which the conformal  algebra of the Jordan algebra is the dynamical symmetry algebra. He also made the following observation \cite{meng2014'}: for a generalized Kepler problem, its hamiltonian and its Laplace-Runge-Lenz vector can all be derived from its dynamical symmetry.

Recently, we exhibited in Ref. \cite{BM2015} the classical dynamical symmetry for the $\mathrm U(1)$-Kepler problems (see Ref. \cite{meng2014a} ). These Kepler-type problems are naturally associated with the euclidean Jordan algebras of complex hermitian matrices. In this article we shall exhibit the classical dynamical symmetry for the $\mathrm {Sp}(1)$-Kepler problems \cite{meng2014b}, i.e. the quaternionic analogues of the $\mathrm U(1)$-Kepler problems. As a result, we obtain the Laplace-Runge-Lenz vector for each $\mathrm {Sp}(1)$-Kepler problem as well as a formulae of expressing the total energy in terms of the angular momentum and the Laplace-Runge-Lenz vector.  For the convenience of readers, we end this introduction with 
\begin{center}
{\bf A list of symbols} 
\end{center}
\begin{eqnarray}
\begin{array}{ll}
\mathbb H & \quad\text{the set of quaternions}\cr
\mathrm H_n(\mathbb H)&\quad\text{the Jordan algebra of quaternionic hermitian matrices of order $n$}\cr
\mathpzc{Im}\,\mathbb H  & \quad\text{the set of imaginary quaternions}\cr
\mathrm i, \mathrm j, \mathrm k & \quad \text{the standard orthonormal basis for $\mathpzc{Im}\,\mathbb H $ such that $\mathrm i\mathrm j=\mathrm k$}\cr
\bar q &\quad\text{the quaternionic conjugate of quaternion $q$, e.g., $\bar{\mathrm k}=-{\mathrm k}$} \cr
\mathpzc{Re}\, q & \quad {1\over 2}(q+\bar q)\cr
\mathpzc{Im}\, q & \quad {1\over 2}(q-\bar q)\cr
\lrcorner & \quad\mbox{the interior product of vectors with forms}\cr
\wedge & \quad\mbox{the wedge product of forms}\cr
{\mathrm d} & \quad\mbox{the exterior derivative operator}\cr
\pi_X:\, T^*X\to X &\quad\mbox{the cotangent bundle projection}\cr
G & \quad\mbox{a compact connected Lie group} \cr
\mathfrak g, \mathfrak g^* & \quad\mbox{the Lie algebra of $G$ and its dual} \cr
\xi & \quad\mbox{an element in $\mathfrak g$}\cr
\langle\, ,\,\rangle & \quad\mbox{either the paring of vectors with co-vectors or inner product}\cr
 \langle\, \mid\,\rangle & \quad\mbox{the inner product on the Jordan algebra $\mathrm H_n(\mathbb H)$}\cr
\mathrm{Ad}_a & \quad\mbox{the adjoint action of $a\in G$ on $\mathfrak g$}\cr
P\to X & \quad\mbox{a principal $G$-bundle} \cr
\Theta & \quad\mbox{a $\mathfrak g$-valued differential one-form on $P$ that}\cr 
&\quad\mbox{defines a principal connection on $P\to X$} \cr
R_a & \quad\mbox{the right action on $P$ by $a\in G$}\cr
X_\xi & \quad\mbox{the vector field on $P$ which represents the}\cr
&\quad\mbox{infinitesimal right action on $P$ by $\xi\in \mathfrak g$}\cr
F & \quad\mbox{a hamiltonian $G$-space}\cr
\mathcal F:=P\times_G F& \quad\mbox{the quotient of $P\times F$ by the action of $G$}\cr
\Phi: F\to \mathfrak g^* & \quad\mbox{the $G$-equivariant moment map}\cr
\mathcal F^\sharp &\quad \text{Sternberg phase space}\cr
\mathcal W &\quad \text{Weinstein's universal phase space}\cr
\end{array}\nonumber
\end{eqnarray}

\section{The dynamical symmetry algebra}
The dynamic symmetry for the $\mathrm{Sp}(1)$-Kepler model at level $n$ with magnetic charge $\mu$ that we shall exhibit is a Poisson realization of the dynamic symmetry algebra $\frak{so}^*(4n)$ on its phase space. Note that $\frak{so}^*(4n)$,  being the conformal algebra of the simple euclidean Jordan algebra $\mathrm H_n(\mathbb H)$ of quaternionic hermitian matrices of order $n$, can be understood naturally in the language of Jordan algebra \cite{PJordan33}. The details are given in the next two paragraphs (see \cite{FK94}  for more details).

For each $u\in V:= \mathrm H_n(\mathbb H)$, we use $L_u$ to denote the Jordan multiplication by $u$, and for each $u, v\in V$, we let $S_{uv}=[L_u, L_v]+L_{uv}$ where $[L_u, L_v]$ stands for the commutator: $L_uL_v-L_vL_u$ and $uv$ in $L_{uv}$ means $L_u (v)$, i.e., the symmetrized matrix product of $u$ with $v$. We use $\{uvw\}$ to denote $S_{uv}(w)$. Then we have
\[
[S_{uv}, S_{zw}]=S_{\{uvz\}w}- S_{z\{vuw\}}\quad \mbox{for any $u, v, z, w$ in $V$. }
\] So these $S_{uv}$ span a real Lie algebra. This Lie algebra is denoted by $\frak{str}$, and is referred to as the structure algebra of $V$. In fact $\frak{str}=\frak{su}^*(2n)\oplus \mathbb R$ where the center $\mathbb R$ is generated by $L_e$ --- the Jordan multiplication by the Jordan identity element $e$. 

The conformal algebra $\frak{co}$ is an extension of the structure algebra $\frak{str}$. As a real vector space we have
\[
\frak{co}=V\oplus \frak{str}\oplus V^*.
\]
An element $z$ in $V$, rewritten as $X_z$, behaves like a vector:
\[
[S_{uv}, X_z]=X_{\{uvz\}}
\]
and an element in $V^*$ behaves like a co-vector. Via the inner product on $V$, we can identify this element in $V^*$ with an element $w$ in $W$, which is rewritten as $Y_w$, then
\[
[S_{uv}, Y_w]=-Y_{\{vuw\}}.
\]
The remaining commutation relations are
\[
[X_u, Y_v]=0, \quad [Y_u, Y_v]=0, \quad [X_u, Y_v]=-2S_{uv}\quad \mbox{for any $u, v$ in $V$. }
\]
One can verify that, indeed, $\frak{co} = \frak{so}^*(4n)$.

\section{The Phase Space}
When the magnetic charge is not zero, the phase space, being a Sternberg phase space \cite{Sternberg77}, is a bit involved, so the Poisson realization of the dynamic symmetry algebra on the the phase space is a bit complicated and the verification of various Poisson relations becomes quite tedious, as evidenced already in simpler models such as the $\mathrm U(1)$-Kepler problems \cite{BM2015}.  

To circumvent this complication, we resort to an insight of A. Weinstein into the Sternberg phase space.  As we shall see the Sternberg phase spaces form a bundle of symplectic manifolds over the affine space of principal connections.  Since its base space is contractible, this fiber bundle must be topologically trivial. Indeed,   A. Weinstein observed that \cite{Weinstein78} this bundle of symplectic manifolds is canonically isomorphic to a product bundle whose fiber is a fixed symplectic manifold, i.e., Weinstein's \emph{universal phase space}. 
\subsection{Review of the work by S. Sternberg and A. Weinstein}
The goal of this subsection is to review the work by S. Sternberg \cite{Sternberg77} and A. Weinstein \cite{Weinstein78}. Let us start with the setup for Sternberg phase space: 
\begin{enumerate}[(i)]
\item A compact Lie group $G$ and a principal $G$-bundle $P\to X$ with a principal connection form $\Theta$, 
\item A Hamiltonian $G$-space $F$ with symplectic form $\Omega$ and a $G$-equivariant  moment map $\Phi$: $F\to \frak g^*$. Here $\frak g$ is the Lie algebra of the Lie group $G$. 
\end{enumerate}
Note that, a co-adjoint orbit of $G$ with the Kirillov-Kostant-Souriau symplectic form is a typical example of Hamiltonian $G$-space.

For the convenience of readers, let us recall that a principal connection $\Theta$ on the principal $G$-bundle $P\to X$ is a $\frak g$-valued differential one-form on $P$ satisfying the following two conditions: 
\begin{center}
1) ${R_{a^{-1}}}^*\,\Theta ={\mathrm {Ad}}_a\Theta$ for any $a\in G$, \quad\quad 2) $\Theta(X_\xi)=\xi$ for any $\xi\in \mathfrak g$.
\end{center} Here $a\in G$, $R_{a^{-1}}$ is the right multiplication of $a^{-1}$ on $P$, ${\mathrm {Ad}}_a$ is the adjoint action of $a$ on $\frak g$, and $X_\xi$ is the vector field on $P$ which represents
the induced action of $\xi\in \frak g$ on $P$, i.e., for any $f\in C^\infty(P)$, we have the Lie derivative
\begin{eqnarray}
{\mathcal L} _{X_\xi}f \left|_{p} =\frac{\mathrm d}{\mathrm dt} \right|_{t=0}f(p\cdot \exp (t\xi)).
\end{eqnarray}
It is easy to see that $[\mathcal L _{X_\xi}, \mathcal L _{X_\eta}] =  \mathcal L _{X_{[\xi, \eta]}}$, or equivalently  $[X_\xi, X_\eta] = X_{[\xi, \eta]}$.

It is a tautology that a smooth (right) $G$-action on a smooth manifold $P$ yields a hamiltonian $G$-action on the symplectic manifold $T^*P$ with the $G$-equivariant  moment map $\rho$: $T^*P\to \frak g^*$. Indeed, if we use $\pi_P$ to denote the bundle projection $T^*P\to P$, then $\rho$ is defined via equation
\[
\langle \rho(z), \xi\rangle =\langle z, X_\xi (\pi_P(z))\rangle.
\]
Another way to see it is this: for each point $p\in P$,  the dual of linear map 
\begin{eqnarray}
\frak g &\to& T_pP\cr
\xi &\mapsto & X_\xi(p)
\end{eqnarray} is a linear map from $T_p^*P$ to $\frak g^*$. Assembling these dual maps together, we get the map $\rho$: $T^*P\to \frak g^*$. 

The $G$-equivariance of $\rho$ is reflected by the fact that $(R_g)_*(X_\xi)=X_{\mathrm{Ad}_{g^{-1}}(\xi)}$ and the fact that $\mathrm{Ad}_g^*(\alpha)=\alpha\circ \mathrm{Ad}_{g^{-1}}$ for any $\alpha\in \frak g^*$. The map $\rho$ is a moment map, i.e.,  $\mathrm d \langle \rho, \xi\rangle=-\hat X_\xi\;\lrcorner\; \omega_P$. Here $\omega_P$ is the canonical symplectic form on $T^*P$, and the vector field $\hat X_\xi$ on $T^*P$ is the cotangent lift of  the vector field $X_\xi$. In local coordinates, we have
\[
z =p_i(z)\,\mathrm d x^i|_{\pi_P(z)}, \quad \omega_P=\mathrm d p_i\wedge \mathrm d x^i, \quad X_\xi= X_\xi^i\frac{\partial}{\partial x^i}, \quad \hat X_\xi= X_\xi^i\frac{\partial}{\partial x^i} -p_i\frac{\partial X_\xi^i}{\partial x^j}\frac{\partial }{\partial p_j}
\] and $\langle \rho, \xi\rangle =  p_iX^i_\xi$. Here $x^i$ is a system of local coordinates on $P$. 

Combining $\rho$ and $\Phi$, we obtain a $G$-equivariant moment map
\[
\psi:\quad T^*P\times F\to \frak g^*.
\] which maps $(x, y)$ to $-\rho(x)+\Phi(y)$.

Since $P\to X$ is a principal $G$-bundle,  $\rho|_{T^*_pP}$: $T_p^*P\to \frak g^*$ is a diffeomorphism for each $p\in P$, then $\rho$: $T^*P\to \frak g^*$ is a submersion. Consequently $\psi$ is a submersion as well. In particular, this means that 
$\psi^{-1}(0)$ is a submanifold of $T^*P\times F$. Since the isotropic group of $0\in \frak g^*$, being the Lie group $G$, is compact, its free action on $\psi^{-1}(0)$ is proper, Theorem 1 in Ref. \cite{Marsden74} applies, so there is a unique symplectic structure $\omega$ on $\psi^{-1}(0)/G$ such that 
$\pi^*\omega=\iota^*(\omega_P+\Omega)$ where $\omega_P$ is the tautological symplectic form on $P$, $\pi$ is the projection and $\iota$ is the inclusion:
\begin{eqnarray}
\begin{tikzcd}[column sep=small]
\psi^{-1}(0) \arrow{d}[swap]{\pi} \arrow{r}{\iota} &  T^*P\times F\\
\psi^{-1}(0)/G& 
\end{tikzcd}\nonumber
\end{eqnarray} 
In the literature this reduced phase space $(\psi^{-1}(0)/G, \omega)$ is called the Weinstein's \emph{universal phase space} $\mathcal W$. Note that, no connection is required for the existence of this universal phase space. 

To understand the meaning of the word ``universal", let us  suppose that a principal connection $\Theta$ on $P\to X$ is given. For each point $p\in P$, let $x$ be the image of $p$ under the bundle projection map, then the equivariant horizontal lifting of tangent vectors on $X$ (provided by $\Theta$) defines a linear map $T_xX\to T_pP$. By assembling the dual of these linear maps together, we arrive at the commutative square
\begin{eqnarray}
\begin{tikzcd}[column sep=small]
T^*P \arrow{d}[swap]{\pi_P}\arrow{r}  &  T^*X\arrow{d}{\pi_X} \\
P\arrow{r} & X
\end{tikzcd}\nonumber
\end{eqnarray} where the top arrow, fiber-wise speaking, is the dual of the horizontal lifting of tangent vectors of $X$ to tangent vectors of $P$. Let $\tilde P$ be the pullback of 
\begin{eqnarray}
\begin{tikzcd}[column sep=small]
&  T^*X\arrow{d}{\pi_X} \\
P\arrow{r} & X
\end{tikzcd}\nonumber
\end{eqnarray}
then we have a $G$-equivariant map $T^*P\to \tilde P$, hence, by taking its product with the identity map on $F$, we obtain a $G$-equivariant map
\[
\alpha_\Theta: T^*P\times F\to \tilde P\times F.
\]
Next, Weinstein observed that the restriction of $\alpha_\Theta$ to $\psi^{-1}(0)$ is a diffeomorphism; then, passing to the quotient by the action of $G$, one obtains a diffeomorphism
\[
\overline \alpha_\Theta: \quad \psi^{-1}(0)/G\to \tilde P\times_GF. 
\] 
So there is a unique symplectic structure $\omega_\Theta$ on $\mathcal F^\sharp:= \tilde P\times_GF$ such that ${\overline\alpha_\Theta}^*(\omega_\Theta)=\omega$. Then $(\mathcal F^\sharp, \omega_\Theta)$ is a symplectic manifold,  which is precisely the Sternberg phase space of $\Theta$ described in Ref. \cite{Sternberg77}. In other words, the Sternberg phase spaces form a bundle of symplectic manifolds over the space of principal connections, and this bundle is canonically isomorphic to the product bundle whose fiber is Weinstein's universal phase space
$\mathcal W$. 

\subsection{Sternberg phase space for $\mathrm {Sp}(1)$-Kepler problems}
The classical $\mathrm {Sp}(1)$-Kepler problems (or models) are indexed by integer (called \emph{level}) $n\ge 2$ and real number (called \emph{magnetic charge}) $\mu\ge 0$. For the model at level $n$ and magnetic charge $\mu$, its phase space is the Sternberg phase space $\mathcal F_\mu^\sharp$ with the following data \cite{meng2014b}:
\begin{enumerate}[(i)]
\item The compact Lie group $G$ is $\mathrm{Sp}(1)$(i.e. $\mathrm{SU}(2)$) and the principal
$G$-bundle is 
\begin{eqnarray}
\mathbb H^n_*&\to & \mathcal C_1\cr
Z&\mapsto &n ZZ^\dag 
\end{eqnarray} and the principal connection form is
\begin{eqnarray}\label{principalConnection}
\Theta = {\mathpzc{Im}(\bar Z\cdot \mathrm{d}Z)\over |Z|^2}.
\end{eqnarray}
Here $\mathcal C_1$ is the rank-one Kepler cone for the simple Euclidean Jordan algebra $\mathrm H_n(\mathbb H)$ of quarternonic hermitian matrices of order $n$. As a submanifold of the Euclidean space $\mathrm H_n(\mathbb H)$,  $\mathcal C_1$ consists of all rank one semi-positive definite elements of $\mathrm H_n(\mathbb H)$. However,  $\mathcal C_1$ is not a Riemannian submanifold of the Euclidean space $\mathrm H_n(\mathbb H)$ because the Riemannian metric on $\mathcal C_1$, called the Kepler metric, does not come from the Euclidean metric via restriction, see Ref. \cite{meng2014b} for the details.

\item For simplicity, we shall identify $\mathfrak g^*$ with $\mathfrak g:= \mathpzc{Im}\, \mathbb H $ via the standard invariant inner product $\langle, \rangle$, i.e. the one such that the imaginary units $\mathrm i$, $\mathrm j$ and $\mathrm k$ form an orthonormal basis for $\mathfrak g$. Then the Hamiltonian $G$-space is
\begin{eqnarray}
F:=\{\xi\in  \mathpzc{Im}\, \mathbb H \mid \xi\bar \xi =\mu^2\}
\end{eqnarray}
whose symplectic form $\Omega_\mu$, being the Kirillov-Kostant-Souriau symplectic form, is given by the formulae 
\begin{eqnarray} 
\Omega_\mu ={\langle\xi, \mathrm{d}\xi\wedge \mathrm{d}\xi \rangle\over 2|\xi|^2}. 
\end{eqnarray} 
Note that, if we write $\xi = \xi^1\mathrm i+ \xi^2\mathrm j+\xi^3\mathrm k$, then this symplectic form yields the following basic Poisson relations on $F$:
\begin{eqnarray}\label{poissonrelation}
\{\xi^1, \xi^2\}=\xi^3, \quad \{\xi^2, \xi^3\}=\xi^1, \quad\{\xi^3, \xi^1\}=\xi^2.
\end{eqnarray}
(Note: In our convention, $\mathrm i\mathrm j=\mathrm k$.)

\item
The $G$-equivariant moment map is
\begin{eqnarray}\label{phi}
\Phi: F&\to& \mathpzc{Im}\, \mathbb H \cr
\xi &\mapsto & 2\xi 
\end{eqnarray}
(Here the $G$-action on both $F$ and $ \mathpzc{Im}\, \mathbb H$ is the adjoint action.) Indeed, $\Phi$ is (obviously) $G$-equivariant; moreover, with a seemingly odd factor of $2$ included in Eq. \eqref{phi}, one can verify that
\[
\mathrm d \langle \Phi, \eta\rangle = X_\eta\;\lrcorner\; \Omega_\mu.
\]  
Here $X_\eta$ is the vector field on $F$ that represents the adjoint action on $F$ by the Lie algebra element $\eta$, so $X_\eta(\xi)=(\xi, [\eta, \xi])\in T_\xi F$.

\end{enumerate}

\subsection{Weinstein's Universal Phase Space for $\mathrm {Sp}(1)$-Kepler problems}
The right action of $\mathrm{Sp}(1)$ on $\mathbb H_*^n$ is the map that sends $(Z, q)\in \mathbb H_*^n\times  \mathrm{Sp}(1)$ to $Z\,[q]$ --- the matrix multiplication of the column matrix $Z$ with the $1\times 1$-matrix $[q]$.  The action induces a hamiltonian $\mathrm{Sp}(1)$-action on $T^*\mathbb H_*^n$ with a tautological moment map.  

The canonical trivialization of $T{\mathbb H}_*^n$ yields two $\mathbb H$-valued function on  
$T{\mathbb H}_*^n$, i.e. the position vector $Z$ and velocity vector $W$. 
Via the standard inner product on $\mathbb H^n$: \[
 \langle U, V\rangle =\mathpzc{Re}(U^\dag V),
\] one can identify the total cotangent space $T^*{\mathbb H}^n_*$ with the total tangent space $T{\mathbb H}^n_*$, then $T{\mathbb H}^n_*$ become a Poisson manifold with the following basic Poisson relation: for any $U, V\in \mathbb H^n$, 
\[
\{\langle U, Z\rangle, \langle V, W\rangle\}= \langle U, V\rangle, \quad \{\langle U, Z\rangle, \langle V, Z\rangle\}=\{\langle U, W\rangle, \langle V, W\rangle\} =0.
\]

With the aforementioned identification of $T^*\mathbb H_*^n$ with $T\mathbb H_*^n$ and $\mathfrak g^*$ with $\mathfrak g$ as well, one can check that the moment map $\rho$ is identified with the map from $T{\mathbb H}_*^n$ to $\frak g$ that maps $(Z, W)$ to $-\mathpzc{Im} (W^\dag Z)$. Therefore, the moment map $\psi$: $T\mathbb H_*^n\times F\to \mathfrak g$ is
\begin{eqnarray}
\psi(Z, W, \xi)=\mathpzc{Im} (W^\dag Z)+2\xi.
\end{eqnarray}
Note that the action of $g$ maps $(Z, W, \xi)$ to $(Z\cdot g^{-1}, W\cdot g^{-1}, g\xi g^{-1})$.

The map $\psi$ has a natural extension to $\tilde \psi$: $T\mathbb H_*^n\times  \mathfrak g\to \mathfrak g$ which is defined by the same formulae: 
\(
\tilde \psi(Z, W, \xi)=\mathpzc{Im} (W^\dag Z)+2\xi.
\) Then $\tilde \psi^{-1}(0)$ is the graph of the map $\xi =-\frac{1}{2} \mathpzc{Im} (W^\dag Z)$.  This map is clearly $G$-equivariant. Moreover, it is a Poisson map. Indeed, for example, \begin{eqnarray}
\{ \langle \mathrm i, W^\dag Z\rangle,  \langle \mathrm j, W^\dag Z\rangle\} &=& \{ \langle W\mathrm i, Z\rangle,  \langle  W\mathrm j, Z\rangle\} \cr
&=&  \{ \langle W\mathrm i, \contraction{}{Z}{\rangle,  \langle }{W}Z\rangle,  \langle  W\mathrm j, Z\rangle\} -<\mathrm i \leftrightarrow \mathrm j>\cr
&=& - \{ \langle W\mathrm i, \contraction{}{Z}{\rangle,  \langle }{W}Z\rangle,  \langle  W, Z\mathrm j\rangle\} -<\mathrm i \leftrightarrow \mathrm j> \cr
&=& -  \langle W\mathrm i, Z\mathrm j\rangle-<\mathrm i \leftrightarrow \mathrm j>\cr
&=& -  2\langle \mathrm k, W^\dag Z\rangle.\nonumber
\end{eqnarray} So, in view of the fact that $\xi^1=-\frac{1}{2} \langle \mathrm i, W^\dag Z\rangle$,
$\xi^2=-\frac{1}{2} \langle \mathrm j, W^\dag Z\rangle$, $\xi^3=-\frac{1}{2} \langle \mathrm k, W^\dag Z\rangle$, we have $\{\xi^1, \xi^2\}=\xi^3$, cf. Eq. \eqref{poissonrelation}.

Therefore, the $\mathrm{Sp}(1)$-equivariant projection map $T\mathbb H_*^n\times  \mathfrak g\to T\mathbb H_*^n$, when restricted to $\tilde \psi^{-1}(0)$, yields a $\mathrm{Sp}(1)$-equivariant Poisson isomorphism of $\tilde \psi^{-1}(0)$ with $T\mathbb H_*^n$. Since $\psi^{-1}(0)$ is a $\mathrm{Sp}(1)$-equivariant submanifold of $\tilde\psi^{-1}(0)$, the universal phase space 
\[
\mathcal W_\mu:=\psi^{-1}(0)/\mathrm{Sp}(1) =\left\{\mathrm{Sp}(1)\cdot (Z, W, \xi)\mid \xi =-\frac{1}{2} \mathpzc{Im} (W^\dag Z), |\xi|=\mu \right \}
\] is naturally identified with
a submanifold of $T\mathbb H_*^n/\mathrm{Sp}(1)$.  Indeed, as a symplectic manifold,  $\mathcal W_\mu$ is naturally identified with the Poisson leave
\[
\left\{\mathrm{Sp}(1)\cdot (Z, W) \mid |\mathpzc{Im} (W^\dag Z)|=2\mu \right \}
\] of the Poisson manifold $T\mathbb H_*^n/\mathrm{Sp}(1)$. In other words, in view of the fact that the Sternberg phase space $\mathcal F_\mu^\sharp$ can be identified with the Weinstein's universal phase space $\mathcal W_\mu$, \emph{the totality of Sternberg phase space can be identified with the Poisson manifold $T\mathbb H_*^n/\mathrm{Sp}(1)$}.

Part (ii) of Lemma \ref{key lemma} in the next section says that \emph{the conformal algebra of $\mathrm H_n(\mathbb H)$ has a Poisson realization on the Poisson manifold $T\mathbb H_*^n/\mathrm{Sp}(1)$}, hence on both the universal phase space $\mathcal W_\mu$ and the Sternberg phase space $\mathcal F_\mu^\sharp$.

\section{Dynamical Symmetry}
Let us fix a $\mathrm{Sp}(1)$-Kepler problem, say at level $n$ and with magnetic charge $\mu$.
Its phase space, being the Sternberg phase space $\mathcal F_\mu^\sharp$, fibers over $T^*\mathcal C_1$. Via the metric on the Euclidean space $V:=\mathrm H_n(\mathbb H)$, $T^*\mathcal C_1$ can be identified with $T\mathcal C_1$, so $\mathcal F_\mu^\sharp$ fibers over $T\mathcal C_1$ as well.  Let functions $x$ and $\pi$ (taking value in vector space $V$) be defined via diagram
\begin{eqnarray}
\begin{tikzcd}[column sep=small]
 &T\mathcal C_1 \arrow{d}[description]{\iota}\arrow{ldd}[swap]{x}\arrow{rdd}{\pi} & \\
&TV\arrow{ld}[description]{\tau_V}\arrow{rd} [description]{t}& \\
V&&V
\end{tikzcd}\nonumber
\end{eqnarray} where $\iota$ is the inclusion map and $\tau_V$ is the tangent bundle projection and $t$ is the natural trivialization map of the tangent bundle of the affine space $V$. The pullback of $x$ and $\pi$ under the bundle map $\mathcal F_\mu^\sharp\to T\mathcal C_1$ shall still be denoted by $x$ and $\pi$. Recall that the inner product on the Jordan algebra $\mathrm H_n(\mathbb H)$ is denoted by $\langle \,\mid\, \rangle$, so if $u\in \mathrm H_n(\mathbb H)$, then $\langle x\mid u\rangle$ is a real function on $\mathcal F_\mu^\sharp$.

\begin{Thm}\label{main theorem}
For the conformal algebra of the Jordan algebra $\mathrm H_n(\mathbb H)$, there is a unique Poisson realization on the Sternberg phase space $\mathcal F^\sharp_\mu$ such that $Y_u$ is realized as function $\mathcal Y_u:=\langle x\mid u\rangle$ and $X_e$ is realized as function $\mathcal X_e:=\langle x\mid \pi^2\rangle +\frac{\mu^2}{ \langle e\mid x\rangle}$. Moreover, if $(e_\alpha)$ is an orthonormal basis for $\mathrm H_n(\mathbb H)$ and $\mathcal L_u$ represents $L_u$ in this Poisson realization, we have the following primary quadratic relation:
\begin{eqnarray} 
{2\over n}\sum_\alpha \mathcal L_{e_\alpha}^2 =\mathcal  L_e^2 + \mathcal  X_e  \mathcal Y_e -\mu^2.
\end{eqnarray} 
 \end{Thm}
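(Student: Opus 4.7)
The plan is to exploit the identification, established in the previous section, of the Sternberg phase space $\mathcal F_\mu^\sharp$ with the Poisson leaf $\mathcal W_\mu = \{\mathrm{Sp}(1)\cdot(Z,W) : |\mathpzc{Im}(W^\dag Z)| = 2\mu\}$ of the Poisson manifold $T\mathbb H_*^n/\mathrm{Sp}(1)$, together with the fact (announced just before the theorem as Part (ii) of Lemma \ref{key lemma}) that the conformal algebra $\mathfrak{co}$ admits a Poisson realization on all of $T\mathbb H_*^n/\mathrm{Sp}(1)$. Restricting this realization to the leaf $\mathcal W_\mu$ then supplies a candidate realization on $\mathcal F_\mu^\sharp$, and the task reduces to matching its generators with the prescribed $\mathcal Y_u$ and $\mathcal X_e$.

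Uniqueness is formal and I would dispatch it first. Since $L_e$ is the identity map on $V$, we have $S_{ue} = L_u$; combined with $[L_u, X_e] = X_u$ and $[X_u, Y_v] = -2 S_{uv}$, this shows that $\{Y_u : u \in V\} \cup \{X_e\}$ Lie-generates all of $\mathfrak{co}$. Prescribing $\mathcal Y_u$ and $\mathcal X_e$ therefore forces
\begin{eqnarray}
\mathcal L_u = -\tfrac{1}{2}\{\mathcal X_e, \mathcal Y_u\}, \quad \mathcal X_u = \{\mathcal L_u, \mathcal X_e\}, \quad \mathcal S_{uv} = -\tfrac{1}{2}\{\mathcal X_u, \mathcal Y_v\},\nonumber
\end{eqnarray}
so the Poisson realization is uniquely determined by the two given formulae.

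For existence, using the pullback $x = n Z Z^\dag$ along the bundle projection gives $\langle x \mid u\rangle = n\,\mathpzc{Re}(Z^\dag u Z)$, which is $\mathrm{Sp}(1)$-invariant, depends on $Z$ alone, and descends to $\mathcal W_\mu$; comparison with the $\mathcal Y_u$ delivered by the lemma is then immediate. The substantive step is to match $\mathcal X_e$: decomposing $W$ into its horizontal and vertical parts with respect to the connection $\Theta$ of Eq. \eqref{principalConnection}, the horizontal part yields the kinetic piece $\langle x \mid \pi^2\rangle$, while the vertical part has squared norm determined jointly by the leaf condition $|\mathpzc{Im}(W^\dag Z)|^2 = 4\mu^2$ and by $\langle e \mid x\rangle = n|Z|^2$, reproducing the centripetal term $\mu^2/\langle e \mid x\rangle$. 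This matching, where the leaf constraint, the Jordan-algebraic inner product, and the connection data all interact, is the step I expect to be the main technical obstacle.

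The primary quadratic relation is then a calculation. With the explicit formulae for $\mathcal L_u$ in hand, one expands $\sum_\alpha \mathcal L_{e_\alpha}^2$ over an orthonormal basis of $H_n(\mathbb H)$; using the completeness relation $\sum_\alpha \langle a \mid e_\alpha\rangle \langle b \mid e_\alpha\rangle = \langle a \mid b\rangle$, the sum collapses to a bilinear expression in the building blocks of $\mathcal X_e$ and $\mathcal Y_e$. A Jordan-algebraic identity specific to $H_n(\mathbb H)$ (whose rank $n$ accounts for the prefactor $2/n$) then repackages this bilinear as $\mathcal L_e^2 + \mathcal X_e \mathcal Y_e$, with the $-\mu^2$ correction arising once more from the leaf constraint $|\mathpzc{Im}(W^\dag Z)|^2 = 4\mu^2$.
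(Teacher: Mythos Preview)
Your proposal is correct and follows essentially the same route as the paper's own proof. Both arguments dispatch uniqueness by noting that $X_e$ together with the $Y_u$'s Lie-generate $\mathfrak{co}$; both obtain existence by restricting the Poisson realization of Lemma~\ref{key lemma}(ii) to the leaf $\mathcal W_\mu$ and then matching $\mathcal Y_u$, $\mathcal X_e$ with the lemma's $\mathscr Y_u$, $\mathscr X_e$ via the explicit horizontal-lift formula for the connection $\Theta$ (what you describe as the horizontal/vertical decomposition of $W$ is exactly the content of the paper's computation leading to $\mathcal X_e\circ\overline\alpha_\Theta=\tfrac14|W|^2$); and both derive the primary quadratic relation by expanding $\sum_\alpha\mathscr L_{e_\alpha}^2$ through the completeness relation and the leaf constraint $|\mathpzc{Im}(W^\dag Z)|^2=4\mu^2$. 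One minor slip: with the Jordan inner product normalized as in the paper you should get $\langle x\mid u\rangle=\mathpzc{Re}(Z^\dag u Z)$, without the extra factor of $n$.
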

The uniqueness of this Poisson realization (if it exists) is clear, that is because $X_e$ and $Y_u$ generate the conformal algebra.

As for the existence of this Poisson realization, the direct verification is very complicated. Since Sternberg phase space can be identified with Weinstein's universal phase space via diffeomorphism 
\begin{eqnarray}
\overline \alpha_\Theta: \quad \mathcal W_\mu:=\psi^{-1}(0)/\mathrm{Sp}(1)\to \mathcal F_\mu^\sharp:=\tilde P\times_{\mathrm{Sp}(1)} F, 
\end{eqnarray}
one just needs to verify the existence of the corresponding Poisson realization on Weinstein's universal phase space $\mathcal W_\mu$, a task which turns to be much simpler. 

\begin{Lem}\label{key lemma}
\begin{enumerate}[(i). ]
\item Let $\mathrm{Sp}(1)\cdot (Z, W, \xi)$ be an element of $\mathcal W_\mu$, and let $\mathcal Y_u$ and  $\mathcal X_e$ be the functions defined in Theorem \ref{main theorem} . Then
\[
\mathcal Y_u\circ \overline \alpha_\Theta (\mathrm{Sp}(1)\cdot (Z, W, \xi)) = \langle Z, uZ\rangle, \quad \mathcal X_e\circ \overline \alpha_\Theta (\mathrm{Sp}(1)\cdot (Z, W, \xi)) =\frac{1}{4}|W|^2.
\]

\item For any vectors $u$, $v$ in $V:=\mathrm{H}_n(\mathbb H)$,  define $\mathrm{Sp}(1)$-invariant functions 
\begin{eqnarray} 
\left\{
\begin{array}{rcl}
\mathscr X_u &:=& \frac{1}{4} \langle W, uW\rangle \\[2mm]
\mathscr Y_v &:=& \langle Z, vZ\rangle\\[2mm]
\mathscr S_{uv} &:=& \frac{1}{2} \langle W, (u\cdot v) Z \rangle
\end{array}\right.
\end{eqnarray}
on $T{\mathbb H}_*^n$. Here $u\cdot v$ means the matrix multiplication of $u$ with $v$.  Then, for any vectors $u$, $v$, $z$, $w$ in $V$,  the following Poisson bracket relations hold:
\begin{eqnarray}
\left\{
\begin{matrix}
\{\mathscr X_u, \mathscr X_v\} =0, \quad \{\mathscr Y_u, \mathscr Y_v\}=0, \quad \{\mathscr X_u,
\mathscr Y_v\} = -2\mathscr S_{uv},\cr\\ 
\{\mathscr S_{uv}, \mathscr X_z\}=\mathscr X_{\{uvz\}}, \quad \{\mathscr S_{uv}, \mathscr Y_z\} = -\mathscr Y_{\{vuz\}},\cr\\
\{\mathscr S_{uv}, \mathscr S_{zw}\} = \mathscr S_{\{uvz\}w}-\mathscr S_{z\{vuw\}}.
\end{matrix}\right.\nonumber
\end{eqnarray}
Consequently, we have a Poisson realization on the Poisson manifold  $T^*{\mathbb H}_*^n/\mathrm{Sp}(1)$ for the conformal algebra of the Jordan algebra $\mathrm H_n(\mathbb H)$.
\item
Let $e$ be the identity element of $V$, $(e_\alpha)$ be an orthonormal basis for $V$, and
$\mathscr L_u=\mathscr S_{eu}$ for any $u\in V$.  Then
\begin{eqnarray} 
{2\over n}\sum_\alpha \mathscr L_{e_\alpha}^2=\mathscr L_e^2 + \mathscr  X_e  \mathscr Y_e  -\mu^2.
\end{eqnarray}

\end{enumerate}
\end{Lem}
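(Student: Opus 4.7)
The plan is to transport the identity to Weinstein's universal phase space $\mathcal W_\mu$ via the isomorphism $\overline\alpha_\Theta$, where every function involved becomes an explicit polynomial in the coordinates $(Z,W)$ on $T\mathbb H^n_*$. Directly from the definitions of part (ii) (using $e\cdot u = u$), one has $\mathscr L_u = \tfrac12\langle W, uZ\rangle$, $\mathscr X_e = \tfrac14|W|^2$, and $\mathscr Y_e = |Z|^2$, while the defining constraint of $\mathcal W_\mu$, namely $\xi = -\tfrac12\mathpzc{Im}(W^\dag Z)$ with $|\xi| = \mu$, yields $|\mathpzc{Im}(W^\dag Z)|^2 = 4\mu^2$.

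The heart of the proof is a Parseval-type identity. I would introduce the Hermitian matrix $K := \tfrac12(WZ^\dag + ZW^\dag) \in V$ and, using $M = M^\dag$ together with the quaternion identity $\mathpzc{Re}(pq) = \mathpzc{Re}(qp)$, check that $\mathpzc{Re}\,\tr(KM) = \langle W, MZ\rangle$ for every $M \in V$. Under the Jordan inner product $\langle u \mid v\rangle = \tfrac{1}{n}\mathpzc{Re}\,\tr(uv)$---the unique normalization compatible with $\mathcal Y_u = \langle x \mid u\rangle = \langle Z, uZ\rangle$ given that $x = nZZ^\dag$---this reads $\langle K \mid e_\alpha\rangle = \tfrac{1}{n}\langle W, e_\alpha Z\rangle = \tfrac{2}{n}\mathscr L_{e_\alpha}$. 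Parseval's identity $\sum_\alpha \langle K \mid e_\alpha\rangle^2 = \langle K \mid K\rangle$ then gives
\[
\sum_\alpha \mathscr L_{e_\alpha}^2 \;=\; \tfrac{n^2}{4}\langle K \mid K\rangle \;=\; \tfrac{n}{4}\,\mathpzc{Re}\,\tr(K^2).
\]

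What remains is a direct computation of $\mathpzc{Re}\,\tr(K^2)$. Expanding the product, two of the four terms collapse to $|Z|^2|W|^2$ (since $Z^\dag Z = |Z|^2$ and $W^\dag W = |W|^2$ are real central scalars), while the remaining two each equal $\mathpzc{Re}((W^\dag Z)^2)$ after folding the $1 \times 1$ block $Z^\dag W$ via $\mathpzc{Re}(pq) = \mathpzc{Re}(qp)$. Writing $W^\dag Z = \langle W, Z\rangle + \mathpzc{Im}(W^\dag Z)$ and using that a purely imaginary quaternion squares to minus its norm-squared, the constraint on $\mathcal W_\mu$ gives $\mathpzc{Re}((W^\dag Z)^2) = \langle W, Z\rangle^2 - 4\mu^2$. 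Assembling the pieces yields $\tfrac{2}{n}\sum_\alpha \mathscr L_{e_\alpha}^2 = \tfrac14\langle W, Z\rangle^2 + \tfrac14|Z|^2|W|^2 - \mu^2$, which is precisely $\mathscr L_e^2 + \mathscr X_e \mathscr Y_e - \mu^2$.

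The main obstacle is not conceptual but careful bookkeeping: pinning down the $\tfrac{1}{n}$ normalization of the Jordan inner product (forced by the factor $n$ in the bundle map $Z \mapsto n ZZ^\dag$), and systematically navigating the non-cyclicity of the quaternionic trace via the scalar identity $\mathpzc{Re}(pq) = \mathpzc{Re}(qp)$. Once these two points are flagged, the rest of the argument is mechanical.
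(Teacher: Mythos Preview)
Your proposal addresses only part~(iii); parts~(i) and~(ii) of the lemma are not treated. For part~(iii) your argument is correct and is essentially the paper's own proof: both introduce the Hermitian matrix $K=(ZW^\dag)_+=\tfrac12(WZ^\dag+ZW^\dag)$, use Parseval in the Jordan inner product $\langle\,\mid\,\rangle$ to rewrite $\sum_\alpha \mathscr L_{e_\alpha}^2$ as a multiple of $\mathpzc{Re}\,\tr K^2$, expand the square, and invoke the constraint $|\mathpzc{Im}(W^\dag Z)|=2\mu$. The only difference is cosmetic: the paper rewrites the cross terms via the polarization $(W^\dag Z)^2+(Z^\dag W)^2=\tfrac12\bigl[(W^\dag Z-Z^\dag W)^2+(W^\dag Z+Z^\dag W)^2\bigr]$, whereas you decompose $W^\dag Z$ directly into its real and imaginary parts; the outcome is identical. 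Your remarks about the $\tfrac1n$ normalization of $\langle\,\mid\,\rangle$ and the use of $\mathpzc{Re}(pq)=\mathpzc{Re}(qp)$ to handle the quaternionic trace are exactly the bookkeeping points the paper relies on implicitly.
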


\begin{proof}
\begin{enumerate}[(i). ]
\item To understand the map $\overline \alpha_\Theta$, we need to figure out the horizontal lift induced from the connection $\Theta$. For $Z\in \mathbb H_*^n$, we let $x=nZZ^\dag$. Suppose that $(x, \dot x)$ is a tangent vector of $\mathcal C_1$ at $x$, and $(Z, \dot Z)$ is its horizontal lift to point $Z$ in $\mathbb H_*^n$. Then, in view of Eq. \eqref{principalConnection}, we have equations
\[
n(\dot Z Z^\dag + Z\dot Z^\dag) =\dot x, \quad \mathpzc{Im} (Z^\dag \dot Z) =0.
\] By solving these equations jointly, we obtain 
\[
\dot Z = \frac{1}{n|Z|^2}(\dot x Z  -\frac{\tr \dot x}{2}Z).
\] 

Suppose that the $\Theta$-induced map $T^*\mathbb H_*^n\to T^*\mathcal C_1$ maps
$(Z, \langle W, \, \rangle)$ to $(x, \langle \pi\mid\;\rangle)$, then $x=nZZ^\dag$ and
\begin{eqnarray}\label{hlift}
\langle \pi\mid \dot x\rangle = \frac{1}{n|Z|^2}\left\langle W,  \dot x Z  -\frac{\tr \dot x}{2}Z\right\rangle.
\end{eqnarray}
In particular, in view of the fact that $(x, ux)\in T_x\mathcal C_1$, we have 
\begin{eqnarray}\label{simpleformula}
\langle \pi\mid ux\rangle &= &  \frac{1}{n|Z|^2}\left\langle W,  (ux) Z  -\frac{\tr (ux)}{2}Z\right\rangle\quad \text{$ux$ is the Jordan product of $u$ with $x$}\cr
&=& \frac{1}{2|Z|^2}\left\langle W,  uZZ^\dag Z+ZZ^\dag uZ  - \mathpzc{Re}\,\tr (uZZ^\dag)Z\right\rangle\cr
&=&\frac{1}{2}\langle W, uZ\rangle.
\end{eqnarray}
Then 
\begin{eqnarray}
\overline \alpha_\Theta (G\cdot (Z, W, \xi))=G\cdot (Z, nZZ^\dag, \pi, \xi).
\end{eqnarray}
Consequently
\[
 \mathcal Y_u\circ \overline \alpha_\Theta (G\cdot (Z, W, \xi))=\langle x\mid u\rangle =\langle nZZ^\dag\mid u\rangle =\mathpzc{Re}\, \tr (ZZ^\dag u)=\langle Z, uZ\rangle
\]
and
\begin{eqnarray}
 \mathcal X_e\circ \overline \alpha_\Theta (G\cdot (Z, W, \xi)) &= & \langle x\mid \pi^2\rangle +\frac{\mu^2}{ \langle e\mid x\rangle}=\langle \pi\mid \pi x\rangle +\frac{n\mu^2}{ \tr  x}\cr
 &=& \frac{1}{2}\langle W, \pi Z\rangle+\frac{\mu^2}{|Z|^2}\quad{\text{using Eq.}\, \eqref{simpleformula}}\cr
  &=& \frac{n}{2}\langle \pi\mid (ZW^\dag)_+\rangle+\frac{\mu^2}{|Z|^2}. \nonumber
\end{eqnarray}
Here $(ZW^\dag)_+=\frac{1}{2}(ZW^\dag+WZ^\dag)$. One can check that $(x, (ZW^\dag)_+)$ is a tangent vector of $\mathcal C_1$ at $x$. Therefore, in view of Eq. \eqref{hlift}, we have
\begin{eqnarray}
 \mathcal X_e\circ \overline \alpha_\Theta (G\cdot (Z, W, \xi)) 
  &=&\frac{1}{2|Z|^2}\left\langle W,  (ZW^\dag)_+ Z  -\frac{\tr (ZW^\dag)_+}{2}Z\right\rangle+\frac{\mu^2}{|Z|^2}\cr
 &=&\frac{1}{2|Z|^2}\left\langle W,  (ZW^\dag)_+ Z  -\frac{\langle W, Z\rangle}{2}Z\right\rangle+\frac{\mu^2}{|Z|^2} \cr
  &=&\frac{1}{2|Z|^2}\left\langle W,  (ZW^\dag)_+ Z\right\rangle  - \frac{1}{4|Z|^2}\langle W, Z\rangle^2+\frac{\mu^2}{|Z|^2}\cr
&=&\frac{1}{4|Z|^2}\left(|W|^2|Z|^2+ \mathpzc{Re}\, (W^\dag Z)^2 -\langle W, Z\rangle^2\right)+\frac{\mu^2}{|Z|^2}\cr
&=&\frac{1}{4|Z|^2}\left(|W|^2|Z|^2+ (\mathpzc{Im} (W^\dag Z) )^2\right)+\frac{\mu^2}{|Z|^2}\cr
&=&\frac{1}{4|Z|^2}\left(|W|^2|Z|^2 - |\mathpzc{Im} (W^\dag Z)|^2\right)+\frac{\mu^2}{|Z|^2}\cr
&=&\frac{1}{4}|W|^2.\nonumber
  \end{eqnarray}
  
\item 
It is clear that $\{\mathscr X_u, \mathscr X_v\} =0$ and $\{\mathscr Y_u, \mathscr Y_v\}=0$. Next, we have
\begin{eqnarray}
\{\mathscr X_u, \mathscr Y_v\}& =&\frac{1}{4} \left \{  \langle W, uW\rangle,  \langle Z, vZ\rangle\right \}\cr
&=&\left \{\langle \contraction{}{W}{, uW\rangle,  \langle} {Z}W, uW\rangle,  \langle Z, vZ\rangle\right \} = -\langle uW, vZ\rangle   \cr
&=& -2\mathscr S_{uv}\nonumber
\end{eqnarray}
and
\begin{eqnarray}
\{\mathscr S_{uv}, \mathscr S_{zw}\} & =&\frac{1}{4} \left \{ \langle W, (u\cdot v) Z \rangle,  \langle W, (z \cdot w) Z \rangle\right \}\cr
&=& \frac{1}{4}  \left \{ \langle  \contraction{}{W}{, (u\cdot v) Z \rangle,  \langle W, (z\cdot w) }{Z}W, (u\cdot v) Z \rangle,  \langle W, (z\cdot  w) Z \rangle\right \}   -<(u\cdot v)\leftrightarrow (z\cdot w)>\cr
&=&  -\frac{1}{4}\langle (z\cdot w)^\dag W, (u\cdot v)Z\rangle  + \frac{1}{4} \langle (u\cdot v)^\dag W, (z\cdot w)Z\rangle \cr
&=&\frac{1}{4}\langle W, [u\cdot v, z\cdot w]Z\rangle\cr
&=&  \mathscr S_{\{uvz\}w}-\mathscr S_{z\{vuw\}}\nonumber
\end{eqnarray}
because $\{uvz\}=\frac{1}{2}(u\cdot v\cdot z+z\cdot v\cdot u)$ and $\{vuw\}=\frac{1}{2}(v\cdot u\cdot w+w\cdot u\cdot v)$.

Thirdly,
\begin{eqnarray}
\{\mathscr S_{uv}, \mathscr X_{z}\} & =&\frac{1}{8} \left \{ \langle W, (u\cdot v) Z \rangle,  \langle W,  z W \rangle\right \} = \frac{1}{4} \left \{ \langle W, (u\cdot v) \contraction{}{Z}{\rangle,  \langle }{W}Z \rangle,  \langle W,  z W \rangle\right \} \cr
&=&  \frac{1}{4}\langle W, (u\cdot v\cdot z)W\rangle  =   \frac{1}{4}\langle W, (z\cdot v\cdot u)W\rangle  \cr
&=&\frac{1}{8}\langle W,(z\cdot v\cdot u+u\cdot v\cdot z)W\rangle\cr
&=&  \mathscr X_{\{uvz\}}.\nonumber
\end{eqnarray}

Finally, we have
\begin{eqnarray}
\{\mathscr S_{uv}, \mathscr Y_{z}\} & =&\frac{1}{2} \left \{ \langle W, (u\cdot v) Z \rangle,  \langle Z,  z Z \rangle\right \} =  \left \{ \langle  \contraction{}{W}{, (u\cdot v)Z \rangle,  \langle  }{Z}W, (u\cdot v)Z \rangle,  \langle Z,  z Z \rangle\right \} \cr
&=& - \langle Z, (z\cdot u\cdot v)Z  \rangle =  - \langle Z, (v\cdot u\cdot z)Z  \rangle\cr
&=&-\frac{1}{2}\langle Z, (z\cdot u\cdot v+v\cdot u\cdot z)Z\rangle\cr
&=&  -\mathscr Y_{\{vuz\}}.\nonumber
\end{eqnarray}

Since functions $\mathscr S_{uv}$, $\mathscr X_z$ and $\mathscr Y_w$ on $T{\mathbb H}_*^n$ (actually on $T^*{\mathbb H}_*^n$) are $\mathrm{Sp}(1)$-invariant, and the action of $\mathrm{Sp}(1)$ on $T^*{\mathbb H}_*^n$ is symplectic, we have a Poisson realization on the Poisson manifold  $T^*{\mathbb H}_*^n/\mathrm{Sp}(1)$ for the conformal algebra of the Jordan algebra $\mathrm H_n(\mathbb H)$.

\item Since $\mathscr L_u=\frac{1}{2}\langle W, uZ\rangle$, we have
\begin{eqnarray}
{2\over n}\sum_\alpha \mathscr L_{e_\alpha}^2& =&\frac{1}{2n} \sum_\alpha \langle W, e_\alpha Z\rangle^2 = \frac{n}{2} \sum_\alpha \langle (ZW^\dag)_+\mid e_\alpha \rangle^2 \cr
&= &  \frac{n}{2} \langle (ZW^\dag)_+\mid (ZW^\dag)_+ \rangle = \frac{1}{2} \tr \left ((ZW^\dag)_+\right )^2\cr
&=& \frac{1}{8} \tr \left (ZW^\dag+WZ^\dag\right )^2\cr
&=& \frac{1}{8} \tr (Z(W^\dag ZW^\dag)+(WZ^\dag W)Z^\dag+|W|^2 ZZ^\dag+|Z|^2WW^\dag)\cr
&=& \frac{1}{8}  ( (W^\dag Z)^2+(Z^\dag W)^2)+\frac{1}{4}|W|^2 |Z|^2\cr
&=& \frac{1}{16}  (W^\dag Z-Z^\dag W)^2+ \frac{1}{16}  (W^\dag Z+Z^\dag W)^2+\frac{1}{4}|W|^2 |Z|^2\cr
&=& -\frac{1}{4}  |\mathpzc{Im} (W^\dag Z)|^2+\frac{1}{4}\langle W, Z\rangle^2+\frac{1}{4}|W|^2 |Z|^2\cr
&=& -\mu^2 +\mathscr L_e^2+\mathscr X_e\mathscr Y_e.\nonumber
\end{eqnarray}
\end{enumerate}

\end{proof}
\underline{Proof of Theorem \ref{main theorem}}.  As we remarked early that the Poisson realization is obviously unique, if it exists. From part (ii) of the above lemma,  we know that there is a Poisson realization for $\frak{so}^*(4n)$ on the Poisson manifold  $T^*{\mathbb H}_*^n/\mathrm{Sp}(1)$, hence on each of its symplectic leave. Since a Sternberg phase space is symplectic equivalent to a symplectic leave of $T^*{\mathbb H}_*^n/\mathrm{Sp}(1)$, in view of part (i) of the above lemma, one obtains the first part of Theorem \ref{main theorem}.  The remaining part of   
Theorem \ref{main theorem} follows trivially from part (iii) of the above lemma.  

\begin{rmk}
The Poisson realization in Theorem \ref{main theorem} is called the dynamical symmetry for the $\mathrm{Sp}(1)$-Kepler problem at level $n$ and with magnetic charge $\mu$. 
\end{rmk}
\begin{rmk}The natural identification of Sternberg phase spaces with symplectic leaves of $T^*{\mathbb H}_*^n/\mathrm{Sp}(1)$ is a bijection, so the totality of Sternberg phase spaces is naturally equivalent to 
the Poisson manifold $T^*{\mathbb H}_*^n/\mathrm{Sp}(1)$. This really indicates that $\mathrm{Sp}(1)$ Kepler problems can be obtained via symplecic-reduction from a model whose phase space is $T^*{\mathbb H}_*^n$ and whose hamiltonian is $\mathrm{Sp}(1)$-invariant, namely the $n$th quaternionic conformal Kepler problem in section 6 of Ref. \cite{meng2014b}.

\end{rmk}

\begin{rmk} In view of Ref. \cite{meng2014'}, Theorem \ref{main theorem} implies that the corresponding $\mathrm{Sp}(1)$-Kepler problem is the Hamiltonian system with phase space $T\mathcal C_1$, Hamiltonian 
 \[
 H={1\over 2} {\mathcal X_e\over \mathcal Y_e} - {1\over \mathcal Y_e} \]
 and Laplace-Runge-Lenz vector
 \[
 \mathcal A_u={1\over 2}\left(\mathcal X_u - \mathcal Y_u {\mathcal X_e\over \mathcal Y_e}\right)+{\mathcal Y_u\over \mathcal Y_e}
 \] where $\mathcal X_u$ and $\mathcal Y_u$ are the functions that represent $X_u$ and $Y_u$ respectively in the Poisson realization in Theorem \ref{main theorem}. Indeed, a simple computation yields 
 \[
 H=\frac{1}{2}\frac{\langle x|\pi^2\rangle}{r} +{\mu^2\over 2r^2} -\frac{1}{r}, 
 \] i.e., the Hamiltonian in Definition 1.1 of Ref. \cite{meng2014b}.  Here $r=\frac{\tr x}{n}$.

\end{rmk}

\section{Quadratic Relations and Energy Formulae}
For the Poisson realization in Theorem \ref{main theorem}, let us assume that 
the elements of the conformal algebra such as $S_{uv}$, $X_z$ and $Y_w$ are realized as functions 
\[
\mathcal S_{u, v}, \quad \mathcal X_z, \quad \mathcal Y_w
\] respectively. We shall use $\mathcal L_u$ to denote $\mathcal S_{ue}$ and $\mathcal L_{u, v}$ to denote
$\frac{1}{2}(\mathcal S_{uv}+\mathcal S_{vu})$.

The main purpose of this section is to list two corollaries of Theorem \ref{main theorem}, one concerning the secondary quadratic relations and one concerning a formula connecting the Hamiltonian to the angular momentum and the Laplace-Runge-Lenz vector.  The proof can be taken verbatim from the last section of Ref. \cite{BM2015}. 

\begin{corollary}\label{QRelations}
Let $e_\alpha$ be an orthonormal basis for ${\mathrm H}_n(\mathbb H)$. In the following we hide the summation sign over $\alpha$ or $\beta$. For the Poisson realization in Theorem \ref{main theorem},
we have the following secondary quadratic relations:
\begin{enumerate}[(i)]
\item $\mathcal  X_{e_\alpha} \mathcal  L_{e_\alpha}=n\mathcal  X_e \mathcal  L_e$, 
$\mathcal  Y_{e_\alpha} \mathcal  L_{e_\alpha} = n\mathcal  Y_e \mathcal  L_e$,
 \\
 \item ${4\over n}\mathcal  L_{e_\alpha, u} \mathcal  L_{e_\alpha}=-\mathcal  X_u\mathcal  Y_e+\mathcal  X_e \mathcal  Y_u$,
\\ 
\item 
$\mathcal  X_{e_\alpha}^2 =n\mathcal  X_e^2$, $\mathcal  Y_{e_\alpha}^2 =n\mathcal  Y_e^2$,
\\
\item ${2\over n} \mathcal  L_{e_\alpha, u} \mathcal  X_{e_\alpha}=-\mathcal  X_u \mathcal  L_e+\mathcal  L_u \mathcal  X_e$, ${2\over n}\mathcal  L_{e_\alpha,u} \mathcal  Y_{e_\alpha}=\mathcal  Y_u \mathcal  L_e-\mathcal  L_u\mathcal  Y_e$,
\\
\item $\mathcal  X_{e_\alpha} \mathcal  Y_{e_\alpha}= n (\mathcal  L_e^2 + \mu^2), 
$
\\
\item $\frac{4}{n^3}\mathcal  L_{e_\alpha, e_\beta}^2=\mathcal  X_e \mathcal  Y_e-\mathcal  L_e^2+\frac{n-2}{n}\mu^2$.
\end{enumerate}

\end{corollary}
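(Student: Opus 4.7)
The plan is to verify all six identities on Weinstein's universal phase space $\mathcal W_\mu$, where by Lemma \ref{key lemma}(i) each generator of $\mathfrak{so}^*(4n)$ admits an explicit formula in the coordinates $(Z,W)$ subject to the reduction constraint $|\mathpzc{Im}(W^\dag Z)|^2 = 4\mu^2$. The crucial rewriting is
\[
\mathscr Y_u = n\langle ZZ^\dag \mid u\rangle, \qquad
\mathscr X_u = \tfrac{n}{4}\langle WW^\dag \mid u\rangle, \qquad
\mathscr L_u = \tfrac{n}{2}\langle (ZW^\dag)_+ \mid u\rangle,
\]
together with $\mathscr L_{u,v} = \tfrac{n}{2}\langle L_u(ZW^\dag)_+ \mid v\rangle$, which presents each generator indexed by $u\in V$ as the Jordan inner product of $u$ with a fixed Hermitian matrix built from $Z,W$. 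The single technical tool needed is the Parseval identity for an orthonormal basis $(e_\alpha)$ of $V=\mathrm H_n(\mathbb H)$:
\[
\sum_\alpha \langle A \mid e_\alpha\rangle\,\langle B \mid e_\alpha\rangle \;=\; \langle A \mid B\rangle, \qquad A,B\in V.
\]

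With this dictionary, each of the single-sum relations (i)--(v) collapses to a bilinear pairing between two of the three Hermitian matrices $ZZ^\dag$, $WW^\dag$, $(ZW^\dag)_+$. By cyclicity of the trace, both sides then become polynomials in the four scalars $|Z|^2$, $|W|^2$, $\mathpzc{Re}(W^\dag Z)$, $|\mathpzc{Im}(W^\dag Z)|^2 = 4\mu^2$; since the right-hand sides involve only $\mathscr X_e = \tfrac{1}{4}|W|^2$, $\mathscr Y_e = |Z|^2$, $\mathscr L_e = \tfrac{1}{2}\mathpzc{Re}(W^\dag Z)$, and $\mu^2$, equality is then a line-by-line comparison. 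As a sample, (v) reduces to $\sum_\alpha \mathscr X_{e_\alpha}\mathscr Y_{e_\alpha} = \tfrac{n}{4}|W^\dag Z|^2 = n(\mathscr L_e^2 + \mu^2)$, the last step being the Weinstein constraint; the others are analogous, with the symmetrization $(ZW^\dag)_+$ entering (ii) and (iv) via the identity $\mathpzc{Re}\,\tr(uM) = \mathpzc{Re}\,\tr(uM_+)$ for Hermitian $u$.

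The double-sum relation (vi) is the genuine obstacle. Applying Parseval once in $\beta$ yields
\[
\sum_{\alpha,\beta}\mathscr L_{e_\alpha,e_\beta}^{\,2} \;=\; \tfrac{n^2}{4}\sum_\alpha \|L_{e_\alpha} M\|^{\,2}, \qquad M := (ZW^\dag)_+,
\]
and the remaining $\alpha$-sum is no longer of pure Parseval type: evaluating it requires two Jordan-algebra Casimir identities, namely the value of $\sum_\alpha e_\alpha^2$ (a multiple of $e$, equal to $\dim V \cdot e$ in the present normalization) and of $\sum_\alpha \mathpzc{Re}\,\tr(e_\alpha M e_\alpha M)$ (expressible in $\tr M$ and $\tr M^2$), both computable from the standard Hermitian basis of $\mathrm H_n(\mathbb H)$. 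The rank-dependent coefficient $\tfrac{n-2}{n}$ in (vi) is precisely where the rank $n$ of the Jordan algebra enters, after combining with the Weinstein constraint to eliminate $|\mathpzc{Im}(W^\dag Z)|^2$ in favor of $4\mu^2$. Because these manipulations --- traces of Hermitian matrix products together with the constraint --- are entirely insensitive to whether the entries lie in $\mathbb C$ or in $\mathbb H$, the proof can be imported essentially verbatim from the last section of \cite{BM2015}, as the authors note.
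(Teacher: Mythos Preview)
Your proposal is correct and follows precisely the route the paper indicates: the paper's own ``proof'' consists solely of the remark that the argument can be taken verbatim from the last section of Ref.~\cite{BM2015}, and what you have written is an accurate synopsis of that argument---transport the identities to $\mathcal W_\mu$ via Lemma~\ref{key lemma}(i), write each generator as $\langle A\mid u\rangle$ for $A\in\{ZZ^\dag,\,WW^\dag,\,(ZW^\dag)_+\}$, collapse the $e_\alpha$-sums with Parseval, and finish using the reduction constraint $|\mathpzc{Im}(W^\dag Z)|^2=4\mu^2$. Your treatment of (vi), including the need for the second-order Casimir identity and the observation that the quaternionic case goes through unchanged from the complex one, is exactly the content the citation is pointing to.
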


\begin{corollary} Let $e_\alpha$ be an orthonormal basis for ${\mathrm H}_n(\mathbb H)$, $L^2={1\over 2}\sum_{\alpha, \beta} \mathcal L_{e_\alpha, e_\beta}^2$, and $A^2=-1+\sum_\alpha \mathcal A_{e_\alpha}^2$. Then the Hamiltonian $H$ satisfies the relation 
\begin{eqnarray}\label{HLA}
-2H\left (L^2- \frac{n^2(n-1)}{4}\mu^2\right ) = \left(\frac{n}{2}\right)^2(n-1-A^2).
\end{eqnarray}
\end{corollary}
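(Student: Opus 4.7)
The plan is to reduce both sides of \eqref{HLA} to a common algebraic expression in the three ``diagonal'' generators $\mathcal X_e$, $\mathcal Y_e$, $\mathcal L_e$ (plus the constant $\mu^2$ and the function $H$), using the secondary quadratic relations from Corollary \ref{QRelations}, and then observe that the two expressions agree. The crucial preliminary simplification is that the Hamiltonian relation $H = \frac{\mathcal X_e - 2}{2 \mathcal Y_e}$ from the preceding remark collapses the Laplace-Runge-Lenz vector to the very clean form
\[
\mathcal A_u = \frac{1}{2}\mathcal X_u - H\,\mathcal Y_u,
\]
as a direct substitution in the defining formula shows. This is what allows $H$ to appear linearly later.

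From here, I would square and sum over an orthonormal basis:
\[
\sum_\alpha \mathcal A_{e_\alpha}^2 \;=\; \frac{1}{4}\sum_\alpha \mathcal X_{e_\alpha}^2 \;-\; H\sum_\alpha \mathcal X_{e_\alpha}\mathcal Y_{e_\alpha} \;+\; H^2 \sum_\alpha \mathcal Y_{e_\alpha}^2.
\]
Applying parts (iii) and (v) of Corollary \ref{QRelations} turns this into $\frac{n}{4}\mathcal X_e^2 - nH(\mathcal L_e^2 + \mu^2) + nH^2\mathcal Y_e^2$. Using $H\mathcal Y_e = \frac{\mathcal X_e - 2}{2}$ to eliminate $H^2\mathcal Y_e^2 = \frac{(\mathcal X_e-2)^2}{4}$, one groups the resulting $\mathcal X_e$-polynomial as $-\frac{n}{2}\mathcal X_e(\mathcal X_e - 2) = -nH\mathcal X_e\mathcal Y_e$, which then combines with the remaining terms to give
\[
n - 1 - A^2 \;=\; -nH\bigl(\mathcal X_e\mathcal Y_e - \mathcal L_e^2 - \mu^2\bigr).
\]

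For the other side I would apply part (vi) of Corollary \ref{QRelations} directly: dividing the identity there by $2$ produces $L^2 = \frac{n^3}{8}\bigl(\mathcal X_e\mathcal Y_e - \mathcal L_e^2 + \frac{n-2}{n}\mu^2\bigr)$, and a routine computation of $\frac{n^2(n-2)}{8}\mu^2 - \frac{n^2(n-1)}{4}\mu^2 = -\frac{n^3}{8}\mu^2$ gives
\[
L^2 - \frac{n^2(n-1)}{4}\mu^2 \;=\; \frac{n^3}{8}\bigl(\mathcal X_e\mathcal Y_e - \mathcal L_e^2 - \mu^2\bigr).
\]
Multiplying this by $-2H$ and comparing with the expression for $n - 1 - A^2$ above immediately yields \eqref{HLA}. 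The main obstacle is purely bookkeeping --- the $\mu^2$-terms have to align across (v) and (vi), and the factors of $n$, $H$, $\mathcal X_e$, $\mathcal Y_e$ have to cancel correctly --- but there is no deeper conceptual obstruction once the factorization $\mathcal A_u = \frac{1}{2}\mathcal X_u - H\mathcal Y_u$ is in place, since this is exactly what forces the common factor $\mathcal X_e\mathcal Y_e - \mathcal L_e^2 - \mu^2$ to appear on both sides.
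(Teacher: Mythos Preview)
Your argument is correct. The factorization $\mathcal A_u=\tfrac12\mathcal X_u-H\,\mathcal Y_u$ follows immediately from the expressions for $H$ and $\mathcal A_u$ in the preceding remark, and your applications of parts (iii), (v), and (vi) of Corollary~\ref{QRelations} are accurate; in particular the $\mu^2$-bookkeeping $\tfrac{n^2(n-2)}{8}-\tfrac{n^2(n-1)}{4}=-\tfrac{n^3}{8}$ and the identification $-\tfrac{n}{2}\mathcal X_e(\mathcal X_e-2)=-nH\,\mathcal X_e\mathcal Y_e$ are exactly right, so both sides of \eqref{HLA} reduce to $-\tfrac{n^3H}{4}\bigl(\mathcal X_e\mathcal Y_e-\mathcal L_e^2-\mu^2\bigr)$.

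The paper does not give its own proof here but defers verbatim to Ref.~\cite{BM2015}; since that argument likewise derives the energy formula from the secondary quadratic relations of Corollary~\ref{QRelations}, your approach is the intended one.
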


\end{document}